\begin{document}
\title{How Often Should CSI be Updated for Massive MIMO Systems with Massive Connectivity?}
\author{\IEEEauthorblockN{Ruichen Deng, Zhiyuan Jiang, Sheng Zhou, and Zhisheng Niu}
        \IEEEauthorblockA{Tsinghua National Laboratory for Information Science and Technology\\
        Department of Electronic Engineering, Tsinghua University, Beijing 100084, China \\
        Email: drc13@mails.tsinghua.edu.cn,  \{zhiyuan,sheng.zhou,niuzhs\}@tsinghua.edu.cn\\}}

\maketitle
\theoremstyle{plain}
\newtheorem{theorem}{Theorem}
\theoremstyle{remark}
\newtheorem{re:re1}{Remark}

\begin{abstract}
  Massive multiple-input multiple-output (MIMO) systems need to support massive connectivity for the application of the Internet of things (IoT). The overhead of channel state information (CSI) acquisition becomes a bottleneck in the system performance due to the increasing number of users. An intermittent estimation scheme is proposed to ease the burden of channel estimation and maximize the sum capacity. In the scheme, we exploit the temporal correlation of MIMO channels and analyze the influence of the age of CSI on the downlink transmission rate using linear precoders.  We show the CSI updating interval should follow a quasi-periodic distribution and reach a trade-off between the accuracy of CSI estimation and the overhead of CSI acquisition by optimizing the CSI updating frequency of each user. Numerical results show that the proposed intermittent scheme provides significant capacity gains over the conventional continuous estimation scheme.
\end{abstract}
\IEEEpeerreviewmaketitle

\section{Introduction}
By applying a large number of antenna elements, the massive multiple-input multiple-output (MIMO) system exploits the potentials of spatial multiplexing with tremendous degrees-of-freedom, and is believed to increase the spectral efficiency 10 times or more as well as improving the radiated energy efficiency in the order of 100 times \cite{MassiveMIMOOverview}. Exploiting the advantages of massive MIMO technique requires the knowledge of channel state information (CSI) at the base station (BS) side. The canonical massive MIMO protocol is to operate in time-division-duplex (TDD) mode in order to acquire CSI through channel reciprocity \cite{TenMyths}. However, the overhead of channel estimation is still unbearable for the application of the Internet of things (IoT) \cite{3GPPM2M}, where the systems need to serve a vast quantity of users within the limited channel coherence time.

Many prior works try to exploit the spatial correlation of channels to reduce the overhead of channel acquisition. The spatial correlation matrices of users are used to design inter-cell and intra-cell pilot reuse schemes \cite{Yin13}\cite{PilotReuse}. Ref. \cite{CompressiveSensing} assumes channel sparsity in spatial domain and estimates channels based on the compressive sensing theory. Mutual-information-optimal pilots for minimum mean square error (MMSE) estimation are proposed in \cite{Jiang15,Deng16,jiang17cl}. 

Another aspect of the channel statistics worth exploring is the temporal correlation. The correlation between the current channel and the previous channel becomes weaker with longer time elapsed \cite{Abdi02}. Under the assumption of the Gilbert-Elliot channel model, the authors introduce the concept \emph{the age of CSI} and study its impact on a general utility function \cite{Costa15}. A more realistic model to characterize the temporal correlation of channels is the first-order Gaussian-Markov process \cite{Dong04}. Training techniques and pilot beam design for single user are discussed under this model \cite{Choi14}\cite{PilotBeamPatternDesign}. In \cite{4,8}, the authors study the multiuser dynamic channel acquisition problem under the block fading channel.

Our work focuses on the multiuser scenario. To ease the burden of CSI acquisition, we propose an intermittent estimation scheme. Different from the conventional scheme which continuously estimates channels of the whole users in every channel block, the proposed scheme exploits the temporal correlation and requires users to estimate their channels intermittently among blocks. The CSI updating interval of each user is shown to follow a quasi-periodic distribution and the CSI updating frequency is optimized based on the temporal correlation coefficient of each user. 
 Hence, the scheme reaches a trade-off between the accuracy of the CSI and the overhead of the CSI acquisition to achieve the maximum sum capacity.

The rest of the paper is organized as follows. Section II describes the system model adopted in this paper. The influence of aged CSI on the downlink transmission rate is investigated in Section III. Then we propose the intermittent estimation scheme in Section IV. The numerical results are presented in Section V and conclusions are drawn in Section VI.

\section{System Model}
We consider a massive MIMO system with $M$ antenna elements serving $K$ users. The  processes of channel estimation and data transmission occur periodically in a block of $L$ channel uses.  The system operates in the calibrated time-duplex-division (TDD) mode so that channel reciprocity is utilized for channel estimation. More specifically, in the beginning of every block, a number of $T$ channel uses is spent by the base station (BS) on estimating uplink channels, afterwards the BS obtains downlink channel vectors as the transpose of uplink channel vectors, and transmits precoded data in the remaining $L-T$ channel uses.

The uplink channel of the $k$-th user in the $i$-th block is 
\begin{equation}
\bm{g}_k(i)=\sqrt{\beta_k} \bm{h}_k(i),
\end{equation}
where $\sqrt{\beta_k}$ and $\bm{h}_k(i)$ denotes the large and small scale fading coefficient, respectively. The small scale fading coefficient satisfies the complex circularly-symmetric Gaussian distribution $\mathcal{CN}(0,\bm{I}_M)$. It stays constant in each block and evolves between blocks according to the first-order stationary Gaussian-Markov process \cite{Dong04}
\begin{equation}\label{GM process}
\bm{h}_k(i+1)=\rho_k\bm{h}_k(i)+\sqrt{1-\rho_k^2}\bm{e}_k(i),~~i\geq 0,
\end{equation}
where $\rho_k$ is the temporal correlation coefficient of channels between blocks, and $\bm{e}_k(i)$ is the innovation process distributed as $\mathcal{CN}(0,\bm{I}_M)$. We assume the initial channels $\bm{g}_k(0)$ and the innovation processes $\bm{e}_k(i)$ of each user are independent, so the channels of different users are independent in each block.

 If the channel is updated in the $i$-th block, then the channel after $n$ blocks can be easily derived by iterating (\ref{GM process}), which is
\begin{equation}
\bm{h}_k(i+n)=\rho_k^n\bm{h}_k(i)+\sqrt{1-\rho_k^{2n}}\bm{e}_{k,n}(i),~~n\geq 1,
\end{equation}
where $\bm{e}_{k,n}(i)\sim \mathcal{CN}(0,\bm{I}_M)$ is the equivalent innovation process in $n$ blocks. \emph{The age of CSI} \cite{Costa15} is defined as the number of blocks elapsed since last channel estimation in our work. We focus on the impact of the CSI updating frequency and only consider estimation errors caused by aging of CSI. As the age of CSI $n$ increases, the temporal correlation between the estimated channel $\bm{h}_k(i)$ and the channel $\bm{h}_k(i+n)$ of ground truth declines exponentially as $\rho_k^n$, making the estimation less accurate.

\section{Transmission Rate with Aged CSI}
The BS needs to precode the user data before downlink transmission to support spatial multiplexing. The received signal of the $k$-th user is
\begin{equation}
\begin{split}
y_k &= \sqrt{\epsilon_k}\bm{g}_k^T \bm{v}_k d_k+\sum_{i=1,i\neq k}^{K}\sqrt{\epsilon_i}\bm{g}_i^T \bm{v}_i d_i+\bm{n}_k\\
&= \sqrt{\epsilon_k\beta_k}\bm{h}_k^T \bm{v}_k d_k+\sum_{i=1,i\neq k}^{K}\sqrt{\epsilon_i\beta_i}\bm{h}_i^T \bm{v}_i d_i+\bm{n}_k,
\end{split}
\end{equation}
where $\epsilon_k$, $\bm{v}_k$, $d_k$ and $\bm{n}_k$ denotes SNR at the transmitter, precoding vector, data symbol and normalized downlink noise, respectively. A lower bound of per user SINR at the receiver can be obtained by using the similar technique as \cite[Theorem 1]{Jose11}, which is denoted as
\begin{equation}\label{SINR}
\gamma_k \!=\! \frac{\epsilon_k\beta_k|\mathbb{E}\bm{h}_k^T\bm{v}_k|^2}{1\!+\!\epsilon_k\beta_k(\mathbb{E}|\bm{h}_k^T\bm{v}_k|^2\!-\!|\mathbb{E}\bm{h}_k^T\bm{v}_k|^2)\!+\!\sum\limits_{i\neq k}\epsilon_i\beta_i\mathbb{E}|\bm{h}_k^T\bm{v}_i|^2}.
\end{equation}
And the downlink transmission rate is obtained:
\begin{equation}
R_k = \log\left\{1+\gamma_k\right\}.
\end{equation}

We consider two kinds of linear precoding schemes, namely matched filter (MF) and zero forcing (ZF). Assume the estimated channel matrix is $\bm{\hat{H}}_k$, then the precoding matrix of matched filter is
\begin{equation}
\bm{V}^\textrm{MF}_k=\frac{1}{\sqrt{\eta^\textrm{MF}_k}}\bm{\hat{H}}_k^*,
\end{equation}
where $\eta_k^\textrm{MF}$ is the normalization coefficient. In the massive MIMO scenario, the value of $\eta_k^\textrm{MF}$ tends to be constant due to the channel hardening effect, which is derived as
\begin{equation}
\eta^\textrm{MF}_k = \frac{1}{K}\mathbb{E}tr\{\bm{\hat{H}}_k^T\bm{\hat{H}}_k^*\}=M.
\end{equation}

For zero forcing precoding, the precoding matrix becomes
\begin{equation}
\bm{V}^\textrm{ZF}_k=\frac{1}{\sqrt{\eta^{ZF}_k}}\bm{\hat{H}}_k^*(\bm{\hat{H}}_k^T\bm{\hat{H}}_k^*)^{-1},
\end{equation}
where the normalization coefficient is
\begin{equation}
\eta^\textrm{ZF}_k = \frac{1}{K}\mathbb{E}tr\{(\bm{\hat{H}}_k^T\bm{\hat{H}}_k^*)^{-1}\}=\frac{1}{M-K}.
\end{equation}

\begin{theorem}
	If the CSI of the $k$-th user has an age of $n$ blocks, then the SINR using MF precoding is given by
	\begin{equation}
	\gamma_{k}^{\textrm{MF}}(n)=\frac{\epsilon_k\beta_k M\rho_k^{2n}}{1+\sum_{i=1}^{K}\epsilon_i\beta_i},
	\end{equation}
	and the SINR using ZF precoding is given by
	\begin{equation}
	\gamma_{k}^\textrm{ZF}(n)=\frac{\epsilon_k\beta_k (M-K)\rho_k^{2n}}{1+(1-\rho_k^{2n})\sum_{i=1}^{K}\epsilon_i\beta_i}.
	\end{equation}
\end{theorem}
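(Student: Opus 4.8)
The plan is to evaluate the three expectations appearing in the SINR lower bound \eqref{SINR} for each precoder, exploiting the decomposition of the true channel into its aged estimate plus an independent innovation. Writing $\bm{\hat{h}}_k$ for the estimate used by the precoder (the genuine channel at the last update) and $\bm{h}_k$ for the current channel, the iterated Gaussian-Markov relation gives $\bm{h}_k=\rho_k^n\bm{\hat{h}}_k+\sqrt{1-\rho_k^{2n}}\,\bm{e}_k$, where $\bm{e}_k\sim\mathcal{CN}(0,\bm{I}_M)$ is independent of $\bm{\hat{h}}_k$ and of every other user's estimate. Substituting the precoding column $\bm{v}_k$ into $\mathbb{E}\bm{h}_k^T\bm{v}_k$, $\mathbb{E}|\bm{h}_k^T\bm{v}_k|^2$ and $\mathbb{E}|\bm{h}_k^T\bm{v}_i|^2$, every cross term carrying a lone factor of $\bm{e}_k$ vanishes in expectation by this independence, which is what makes the computation tractable.

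For MF I would take $\bm{v}_k=M^{-1/2}\bm{\hat{h}}_k^*$. The signal mean is then $\mathbb{E}\bm{h}_k^T\bm{v}_k=\rho_k^n M^{-1/2}\mathbb{E}\|\bm{\hat{h}}_k\|^2=\rho_k^n\sqrt{M}$, so the numerator term is $M\rho_k^{2n}$. For the second moment I would condition on $\bm{\hat{h}}_k$, using $\mathbb{E}[|\bm{e}_k^T\bm{\hat{h}}_k^*|^2\mid\bm{\hat{h}}_k]=\|\bm{\hat{h}}_k\|^2$, and then average over $\bm{\hat{h}}_k$; the one non-routine ingredient here is the fourth moment $\mathbb{E}\|\bm{\hat{h}}_k\|^4=M(M+1)$, which follows from $\|\bm{\hat{h}}_k\|^2$ being Gamma-distributed. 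These combine so that the self-interference variance $\mathbb{E}|\bm{h}_k^T\bm{v}_k|^2-|\mathbb{E}\bm{h}_k^T\bm{v}_k|^2$ collapses to exactly $1$, and the inter-user term reduces (using independence across users and $\mathbb{E}\|\bm{h}_k\|^2=M$) to $1$ as well, yielding the claimed $\gamma_k^{\textrm{MF}}(n)$ once inserted into \eqref{SINR}.

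For ZF I would write $\bm{v}_k=\sqrt{M-K}\,\bm{w}_k$ with $\bm{w}_k$ the $k$-th column of $\bm{\hat{H}}^*(\bm{\hat{H}}^T\bm{\hat{H}}^*)^{-1}$, and use the zero-forcing identity $\bm{\hat{h}}_j^T\bm{w}_k=\delta_{jk}$. This immediately kills the correlated part of every leakage term, so both the self-interference variance and the inter-user interference are driven purely by the innovation $\bm{e}_k$, each reducing to a multiple of $\mathbb{E}\|\bm{w}_k\|^2$. The key identity is $\|\bm{w}_k\|^2=[(\bm{\hat{H}}^T\bm{\hat{H}}^*)^{-1}]_{kk}$, whence $\mathbb{E}\|\bm{w}_k\|^2=1/(M-K)$; this is exactly the inverse-Wishart moment already invoked in the normalization $\eta_k^{\textrm{ZF}}=1/(M-K)$, so I can quote it rather than re-derive it. The signal term is $(M-K)\rho_k^{2n}$, the self-interference variance is $1-\rho_k^{2n}$, and each inter-user term is $1-\rho_k^{2n}$, which assemble into $\gamma_k^{\textrm{ZF}}(n)$.

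The main obstacle is the clean cancellation in the self-interference variance: in both cases it requires the signal second moment and its squared mean to differ by a quantity that is \emph{independent of $M$}, which hinges on getting $\mathbb{E}\|\bm{\hat{h}}_k\|^4$ (MF) and $\mathbb{E}\|\bm{w}_k\|^2$ (ZF) exactly right. I expect the careful bookkeeping of conjugates and transposes in $\bm{h}_k^T\bm{v}_k$, together with the correct use of the Wishart moment, to be the only places where an error could slip in; the remainder is routine conditioning on the estimate followed by averaging.
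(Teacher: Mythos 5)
Your proposal is correct and takes essentially the same route as the paper's own proof: the same aged-CSI decomposition $\bm{h}_k=\rho_k^n\bm{\hat{h}}_k+\sqrt{1-\rho_k^{2n}}\bm{e}_k$, evaluation of the same three moments (signal mean, self-interference variance, inter-user leakage) for each precoder, and substitution into the SINR bound (\ref{SINR}). The paper merely states the resulting values of (\ref{MF1})--(\ref{ZF3}), invoking the scaled chi-square distribution for MF, whereas you supply the underlying details (the fourth moment $\mathbb{E}\|\bm{\hat{h}}_k\|^4=M(M+1)$, the zero-forcing identity, and the inverse-Wishart moment $\mathbb{E}\|\bm{w}_k\|^2=1/(M-K)$), all of which check out.
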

 \begin{proof}
 	See Appendix \ref{appendices1}.
 \end{proof}

From the above theorem, we observe the fact that for both precoding schemes, the SINR functions of the $k$-th user monotonically decreases as its age of CSI grows, and are not affected by the ages of CSI of other users. For the convenience of expressions, we will use the unified term $\gamma_k(n), R_k(n)$ for the two precoding schemes in the rest of the paper.

\section{Intermittent Estimation Scheme}
\subsection{Scheme Description}
The channel estimation overhead increases linearly with the total number of users. When user number increases to a certain amount, the channel block can no longer support the estimation overhead of all users. So we propose an intermittent channel estimation scheme. In every block, the BS chooses a subset of users instead of the whole users to estimate their channels. Afterwards, the BS forms the precoder matrix using the latest CSI of each user and then transmits data to all users. The scheme reduces the overhead of channel estimation at the cost of a decline of the estimation accuracy.
\begin{figure}[!t]
	\centering
	\includegraphics[width=3.3in]{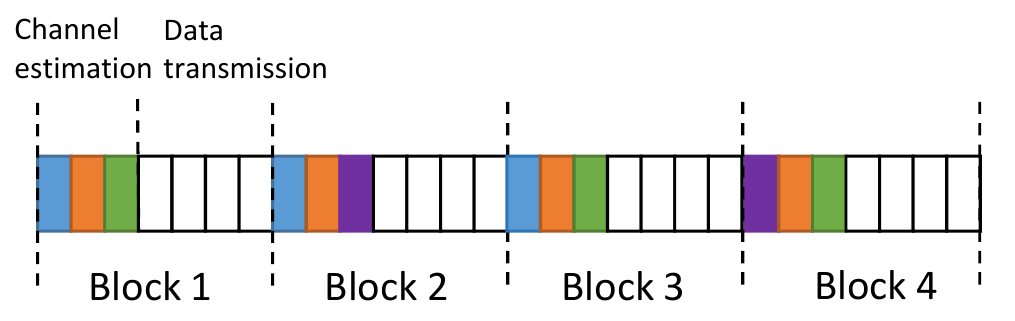}
	\caption{An example of the intermittent estimation scheme.}
	\label{fig:IES}
\end{figure}

An example of the intermittent estimation scheme is illustrated in Fig. \ref{fig:IES}. The system provides data service for total 4 users. The pilot pattern has a period of 4 blocks, and the system selects 3 users from the whole user set for channel estimation in each block. In each period of the pilot pattern, the CSI of blue user is updated twice with an interval of 1 block and once with an interval of 2 blocks. So its average updating interval is $4/3$ blocks and CSI updating frequency equals $3/4$ per block. Similarly, the CSI updating frequencies of orange, green and purple users are $1, 3/4$ and $1/2$ per block, respectively.
 
For a given user set, we try to maximize the average user sum rate by choosing the proper user subset for channel estimation in every block. The optimization problem is formulated as
\begin{equation}\label{original_OP}
\begin{split}
\max_{q_k(i),T}~~ &\lim_{I\rightarrow\infty}\frac{1}{I}\left(1-\frac{T}{C}\right)\sum_{i=0}^{I-1}\sum_{k=1}^{K}R_{k,i}\\
s.t.~~ &q_k(i)\in\{0,1\},~~k=1,\cdots,K\\
&\sum_{k} q_k(i)=T,
\end{split}
\end{equation} 
where $q_k(i)$ indicates whether the $k$-th user is chosen to do channel estimation in the $i$-th block. The sum rate is multiplied by a discounting factor $1-T/C$ where $T/C$ represents the ratio of estimation overhead.

Since the transmission rate of the $k$-th user is only affected by its own age of CSI, rearranging the order of CSI updating intervals has no influence on the rate performance of the $k$-th user. Denote $p_k$ as the CSI updating frequency and $f_{k,n}$ as the distribution CSI updating interval. The CSI updating frequency is the reciprocal of the average CSI updating interval:
\begin{equation}
p_k = \frac{1}{\sum_n n f_{k,n}}.
\end{equation}

We consider a new optimization problem as
\begin{equation}\label{OP}
\begin{split}
\max_{T,p_k,f_{k,n}}~~ &\left(1-\frac{T}{C}\right)\sum_{k=1}^{K}\mathbb{E}\{R_k\}\\
s.t.~~&0\leq f_{k,n}\leq 1,~~k=1,\cdots,K\\
&\sum_n f_{k,n}=1,~~k=1,\cdots,K\\
&\sum_{k} p_k=T,
\end{split}
\end{equation}
which is actually a relaxation of the original problem (\ref{original_OP}). The second constraint is relaxed from pilot length in every block equaling $T$ to average pilot length equaling $T$. When the number of users goes to infinity, the pilot length tends to be the sum of the CSI updating frequencies of all users by the law of large numbers. Therefore, the relaxation is asymptotically tight.

The relaxed problem can be solved in three steps. Firstly, we fix the value of the CSI updating frequency $p_k$ and the pilot length $T$ and try to optimize the distribution  $f_{k,n}$ of CSI updating interval for each user. Then $p_k$ is optimized under fixed value of $T$. Finally we find the optimal $T$ by a one-dimensional search.
\subsection{Optimizing the Distribution of CSI Updating Interval}
To solve the relaxed problem (\ref{OP}), we firstly consider the rate performance of a single user, and optimize $f_{k,n} $ under fixed value of the CSI updating frequency $p_k$. It is formulated as the following optimization problem,
\begin{equation}\label{SU_OP}
\begin{split}
\max_{f_{k,n}}~~ &\mathbb{E}\{R_k\}\\
s.t.~~&0\leq f_{k,n}\leq 1,~~k=1,\cdots,K\\
&\sum_n f_{k,n}=1\\
&\sum_n nf_{k,n}=\frac{1}{p_k}.
\end{split}
\end{equation}

Assume the $k$-th user has done $N_k$ times of channel estimations in a total of $N$ blocks, and there are $F_{k,n}$ CSI updating intervals with block length of $n$. The total rate in a CSI updating interval with block length of $n$ can be calculated as 
\begin{equation}
G_k(n)\triangleq\sum_{i=0}^{n-1}R_k(i),
\end{equation} 
where $R_k(i)=\log(1+\textrm{SINR}_k(i))$ is the transmission rate with a CSI age of $i$ blocks. Obviously, $R_k(i)$ is a non-increasing function with respect to $i$. According to the definition of $p_k$ and $f_{k,n}$, the limitations $\lim_{N\rightarrow\infty} N_k/N=p_k$ and $\lim_{N\rightarrow\infty} F_{k,n}/N_k=f_{k,n}$ both hold. Therefore, the average transmission rate of the $k$-th user is
\begin{equation}
\mathbb{E}\{R_k\} = \lim_{N\rightarrow\infty} \frac{1}{N}\sum_n F_{k,n} G_k(n)=p_k\sum_n f_{k,n}G_k(n).
\end{equation}

We extend $G_k(n)$ to a piecewise function:
\begin{equation}
G_k(x)\! = \!
\begin{cases}
0 & x=0\\
\sum_{i=0}^{x-1}R_k(i)& x\in\mathcal{N}_+\\
(\!1\!-\!x\!+\!\lfloor x \rfloor\!)\!G_k\!(\!\lfloor x \rfloor\!)\!+\!(\!x\!-\!\lfloor x \rfloor\!)\!G_k\!(\!\lfloor x \rfloor\!+\!1)& \!x\!\notin\!\mathcal{N},x\!\geq\! 0
\end{cases}
\end{equation}

The optimal distribution of CSI updating interval is obtained according to the following theorem. 
\begin{theorem}
	The transmission rate of the $k$-th user achieves the maximum value
	\begin{equation}
	S_k(p_k)=p_k G_k(\frac{1}{p_k}),
	\end{equation} 
	if the distribution of CSI updating interval is
	\begin{equation}\label{fkn}
	f_{k,n}=
	\begin{cases}
	1-\frac{1}{p_k}+\lfloor \frac{1}{p_k} \rfloor& n=\lfloor \frac{1}{p_k} \rfloor\\
	\frac{1}{p_k}-\lfloor \frac{1}{p_k} \rfloor& n=\lfloor \frac{1}{p_k} \rfloor+1\\
	0. & else.
	\end{cases}
	\end{equation}
\end{theorem}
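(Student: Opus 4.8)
The plan is to read problem (\ref{SU_OP}) as the maximization of the linear functional $\sum_n f_{k,n}G_k(n)$ over probability distributions $\{f_{k,n}\}_{n\ge 1}$ whose mean is pinned to $\sum_n n f_{k,n}=1/p_k$; since $p_k$ is held fixed, maximizing $\mathbb{E}\{R_k\}=p_k\sum_n f_{k,n}G_k(n)$ is equivalent to maximizing this functional. The key structural fact I would exploit is that $G_k$ is concave. Its first difference is $G_k(n+1)-G_k(n)=R_k(n)$, and $R_k(n)$ is non-increasing in $n$ (each $\gamma_k(n)$ decreases with the age of CSI by Theorem~1, hence so does $R_k(n)=\log(1+\gamma_k(n))$). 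A sequence with non-increasing first differences is concave on the integers, and its piecewise-linear interpolation---precisely the extension $G_k(x)$ defined before the statement---is concave on $[0,\infty)$, with slope $R_k(m)$ on each segment $[m,m+1]$.

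Given concavity, the upper bound follows from Jensen's inequality. Treating $\{f_{k,n}\}$ as the law of an integer-valued interval length $X$ with $\mathbb{E}[X]=1/p_k$, concavity of $G_k$ yields $\sum_n f_{k,n}G_k(n)=\mathbb{E}[G_k(X)]\le G_k(\mathbb{E}[X])=G_k(1/p_k)$, so that $\mathbb{E}\{R_k\}\le p_k G_k(1/p_k)=S_k(p_k)$ for every feasible distribution. (The expectation is finite since $R_k$ is non-increasing and nonnegative, giving $G_k(n)\le n R_k(0)$, so $\mathbb{E}[G_k(X)]\le R_k(0)/p_k$.) This establishes $S_k(p_k)$ as an upper bound on the objective.

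It then remains to check that the distribution (\ref{fkn}) is feasible and attains the bound. Writing $m=\lfloor 1/p_k\rfloor$ and $\theta=1/p_k-m\in[0,1)$, the proposed law places mass $1-\theta$ on $n=m$ and mass $\theta$ on $n=m+1$; a direct computation gives $\sum_n f_{k,n}=(1-\theta)+\theta=1$ and $\sum_n n f_{k,n}=m(1-\theta)+(m+1)\theta=1/p_k$, so both constraints hold, and $m\ge 1$ (because $p_k\le 1$) guarantees the support consists of valid positive interval lengths. Since $G_k$ is linear on $[m,m+1]$, the objective evaluates to $p_k\bigl[(1-\theta)G_k(m)+\theta G_k(m+1)\bigr]=p_k G_k(m+\theta)=p_k G_k(1/p_k)=S_k(p_k)$, matching the Jensen bound and hence proving optimality. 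The only genuinely delicate step is the concavity argument---specifically, verifying that the piecewise-linear extension inherits concavity from the non-increasing increments $R_k(n)$---since everything else is either the one-line Jensen estimate or the routine verification of the two constraints and the attained value.
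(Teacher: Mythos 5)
Your proof is correct and follows essentially the same route as the paper's: establish concavity of the piecewise-linear extension $G_k(x)$ from the monotonicity of $R_k(n)$ (the paper does this via a three-point slope comparison, you via non-increasing first differences---the same underlying fact), then apply Jensen's inequality to bound $\mathbb{E}\{R_k\}$ by $p_k G_k(1/p_k)$. Your explicit verification that the quasi-periodic distribution is feasible and attains the bound (using linearity of $G_k$ on $[\lfloor 1/p_k\rfloor, \lfloor 1/p_k\rfloor+1]$) is a welcome detail that the paper states only as a one-line equality claim.
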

\begin{proof}
See Appendix \ref{appendices2}.
\end{proof}
The theorem requires the CSI updating interval to be quasi-periodic. More specifically, if $1/p_k$ is an integer, the CSI should be updated every $1/p_k$ blocks. Otherwise the CSI should be updated every $\lfloor 1/p_k \rfloor$ or $\lfloor 1/p_k \rfloor+1$ blocks with an average updating interval of $1/p_k$ blocks.

\subsection{Optimizing the CSI Updating Frequency}
After optimizing $f_{k,n}$, the transmission rate of the $k$-th user is $S_k(p_k)$. We need to allocate the total estimation resource of $T$ channel uses to all users by optimizing the CSI updating frequency. The problem is formulated as
\begin{equation}\label{newOP}
\begin{split}
\max_{p_k}~~ &\left(1-\frac{T}{C}\right)\sum_{k=1}^{K}S_k(p_k)\\
s.t.~~&0\leq p_k\leq 1,~~k=1,\cdots,K\\
&\sum_{k} p_k=T.
\end{split}
\end{equation}
\begin{theorem}
	The problem (\ref{newOP}) is a concave optimization problem.
\end{theorem}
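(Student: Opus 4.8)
The plan is to verify the two ingredients that define a concave program: a convex feasible set and a concave objective (for a maximization). The feasible set here is cut out by the box constraints $0\leq p_k\leq 1$ together with the single linear equality $\sum_k p_k=T$, i.e.\ the intersection of a hypercube with a hyperplane. This is a convex polytope, so the constraint side is immediate and requires no further work.

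The substance lies in the objective $\left(1-\tfrac{T}{C}\right)\sum_k S_k(p_k)$. Here $T$ is held fixed, so $1-\tfrac{T}{C}$ is a positive constant, and since a nonnegative combination of concave functions is concave, it suffices to show that each $S_k(p_k)=p_k G_k(\tfrac{1}{p_k})$ is concave on $(0,1]$. First I would establish that the extended map $G_k$ is concave on $[0,\infty)$. By construction $G_k$ is piecewise linear with breakpoints at the integers, and its slope on $[n,n+1]$ equals the increment $G_k(n+1)-G_k(n)=R_k(n)$. By Theorem~1 the SINR $\gamma_k(n)$ is decreasing in the age $n$, so $R_k(n)=\log(1+\gamma_k(n))$ is non-increasing; thus the slopes of $G_k$ are non-increasing, which is precisely concavity of a piecewise-linear function.

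Next I would invoke the perspective transformation: for any concave $g$, its perspective $\tilde{g}(x,t)=t\,g(x/t)$ is jointly concave on $t>0$. Applying this to the concave $G_k$ and restricting to the line $x=1$ gives $t\,G_k(\tfrac{1}{t})$, which is a restriction of a concave function to an affine set and hence concave in $t$. Setting $t=p_k$ recovers exactly $S_k(p_k)$, so each $S_k$ is concave on $(0,1]$; the value at $p_k=0$ extends by continuity, using $R_k(n)\to 0$ (since $\rho_k^{2n}\to 0$), which forces $p_k G_k(\tfrac{1}{p_k})\to 0$. Summing over $k$ and multiplying by the positive constant then yields a concave objective over a convex set, establishing the claim.

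The main obstacle is recognizing that $S_k$ is the perspective of $G_k$; once this is seen, concavity is automatic and the proof is short. The direct alternative—forming $p_k G_k(\tfrac{1}{p_k})$ and checking a second derivative—is awkward because $G_k$ is only piecewise linear and non-smooth, so one would have to treat the breakpoints at $\tfrac{1}{p_k}\in\mathcal{N}_+$ and argue concavity across them by comparing one-sided slopes. The perspective route avoids this case analysis entirely, reducing everything to the monotonicity of $R_k$ supplied by Theorem~1.
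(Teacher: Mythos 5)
Your proof is correct and follows essentially the same route as the paper: the paper's Appendix C computation—applying concavity of $G_k$ at $1/p_{k1}$ and $1/p_{k2}$ with the weights $\theta p_{k1}/(\theta p_{k1}+\bar{\theta} p_{k2})$ and $\bar{\theta} p_{k2}/(\theta p_{k1}+\bar{\theta} p_{k2})$—is precisely the hand-written verification that the perspective $p\,G_k(1/p)$ of a concave function is concave, which you instead invoke as a standard fact. The only cosmetic differences are that you re-derive the concavity of $G_k$ from its non-increasing piecewise-linear slopes $R_k(n)$ (the paper reuses the chord argument from its proof of Theorem 2) and that you handle the endpoint $p_k=0$ by continuity, a point the paper glosses over.
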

\begin{proof}
See Appendix \ref{appendices3}.
\end{proof}

The concavity guarantees the global optimality of any local maximum we find. However, the problem is still hard to solve since the function $G_k(x)$ is non-differentiable. So we consider to approximate $G_k(x)$ by replacing its function value by a parabola in each interval $(n-\delta,n+\delta),n\in\mathcal{N}_+$. The approximation function is denoted by $\hat{G}_k(x)$ as (\ref{G}). It can be validated that $\hat{G}_k(x)$ is concave and differentiable, whose derived function is denoted by $\hat{G}'_k(x)$ as (\ref{G'}).

\newcounter{mytempeqncnt}
\begin{figure*}[!t]
	\normalsize
	\setcounter{mytempeqncnt}{\value{equation}}
	\setcounter{equation}{22}
	
\begin{equation}\label{G}
\hat{G}_k(x)\!=\!
\begin{cases}
\frac{R(n)\!-\!R(n-1)}{4\delta}(\!x\!-\!n\!)^2\!+\!\frac{R(n)\!+\!R(n-1)}{2}(\!x\!-\!n\!)\!+ G_k(n)+\frac{\delta(R(n)-R(n-1))}{4}& n\!-\!\delta\!<\! x\!<\!n\!+\!\delta, n\!\in\!\mathcal{N}_+\\
G_k(x) & else
\end{cases}
\end{equation}

\begin{equation}\label{G'}
\hat{G}'_k(x)\!=\!
\begin{cases}
R(0)& 0\leq x\leq 1-\delta\\
R(n)& n+\delta\leq x \leq n+1-\delta,n\in\mathcal{N}_+\\
\frac{R(n)\!-\!R(n-1)}{2\delta}(\!x\!-\!n\!)+\frac{R(n)\!+\!R(n-1)}{2}&n\!-\!\delta\!<\! x\!<\!n\!+\!\delta, n\!\in\!\mathcal{N}_+
\end{cases}
\end{equation}
	\setcounter{equation}{\value{mytempeqncnt}}
	\hrulefill
\end{figure*}


Similarly, we approximate the rate function $S_k(p_k)$ by
$\hat{S}_k(p_k)=p_k \hat{G}_k(1/p_k)$. And its derived function is
\setcounter{equation}{24}
\begin{equation}
\hat{S}'_k(p_k)=\hat{G}_k(\frac{1}{p_k})-\frac{1}{p_k}\hat{G}'_k(\frac{1}{p_k}).
\end{equation}

When $\delta\rightarrow 0$, the solution to the approximated version of the optimization problem (\ref{newOP}) tends to be the real solution.

We use the gradient projection method to solve the approximated problem, which is an iterative method. In the $t$-th iteration step, we find a new value for CSI updating frequency in the direction of the gradient of $\sum_{j=1}^{K}\hat{S}_k(p_j)$:
\begin{equation}
x_k^{(t)}=p_k^{(t)}+a\frac{\hat{S}'_k(p_k^{(t)})}{\sqrt{\sum_{k=1}^{K}\hat{S}_k^{'2}(p_k^{(t)})}},
\end{equation} 
where $a$ is the iteration step size.

The point $\bm{x}=(x_1,x_2,\cdots,x_K)\in \mathcal{R}^K$ may lie beyond the constraint space $\mathcal{P}=\{\bm{p}|\sum_{k=1}^{K}p_k=T,0\leq p_k\leq 1,k=1,\cdots,K\}$, So we need to project $\bm{x}$ to the space $\mathcal{P}$. The projection is equivalent to solving the optimization problem
\begin{equation}\label{projection}
\begin{split}
\max_{\bm{p}^{(t+1)}} &||\bm{p}^{(t+1)}-\bm{x}^{(t)}||_2\\
s.t.&~~\bm{p}^{(t+1)}\in\mathcal{P}.
\end{split}
\end{equation}

By deriving the KKT conditions of (\ref{projection}), we get
\begin{equation}\label{KKT}
p_k=x_k+{\nu+\lambda_k-\mu_k},
\end{equation}
where $\lambda_i\geq 0,\mu_i\geq 0,\nu$ are the Lagrange multipliers of (\ref{projection}). If we arrange $\{x_i\}$ in ascending order as $\{x_{a(i)}\}$, then the elements of its projection vector $\{p_{a(i)}\}$ are also in ascending order. We maintain $i,j$ as the indexes of the first value larger than 0 and the last value less than 1 in $\{p_{a(i)}\}$, respectively. Summing up (\ref{KKT}) from $i$ to  $j$ gets
\begin{equation}
\nu=\frac{T-(K-j)-\sum_{l=i}^{j}x_l}{j-i+1}.
\end{equation}

We initialize $i=1,j=K$ and test whether the constraints $0\leq p_{a(k)}\leq 1$ are met. If not, we either increase $i$ or decrease $j$, according to the value of $x_{a(i)}+x_{a(j)}$. The detailed projection method is described by Algorithm \ref{alg:projection}.
\begin{algorithm}
	\caption{Projection Algorithm for (\ref{projection})}\label{alg:projection}
	\begin{algorithmic}[1]  
		\STATE Sort $\{x_k\}$ in ascending order as $\{x_{a(k)}\}$. \\
		Initialize $i=1,j=K$.
		\WHILE{$i<j$}
		\STATE $\nu=\frac{T-(K-j)-\sum_{l=i}^{j}x_l}{j-i+1}$
		\IF{$x_i+\nu\geq 0 ~\textrm{and}~ x_j+\nu\leq 1$}
		\FOR{$k=i,i+1,\cdots,j$}
		\STATE $p_{a(k)}=x_{a(k)}+\frac{T-(K-j)-\sum_{l=i}^{j}x_l}{j-i+1}$
		\ENDFOR
		\STATE Break.
		\ELSE
		\IF{$x_{a(i)}+x_{a(j)}\geq1$}
		\STATE $p_{a(j)}=1$
		\STATE $j=j-1$
		\ELSE
		\STATE $p_{a(i)}=0$
		\STATE $i=i+1$
		\ENDIF
		\ENDIF
		\ENDWHILE
	\end{algorithmic}
	
\end{algorithm} 

\subsection{Optimizing the Pilot Length}
The optimization problem (\ref{OP}) is finally solved by finding the maximum sum rate provided by the solution of (\ref{newOP}) for each value of $T\in\{0,1,\cdots,K\}$.

When $T=0$, the CSI updating frequencies of all users are $0$. As a result, the sum rate is $\sum_{k=1}^{K}S_k(p_k)=0$. In this case, the system spends no resource on channel estimation, therefore the precoding process is meaningless.

When $T=K$, the CSI updating frequencies of all users are $1$, which means the system applies conventional continuous estimation scheme and estimates the channels of the whole users in every block. The sum rate can be calculated as $(1-K/C)\sum_{k=1}^{K}R_k(0)$.
\section{Numerical Results}
\begin{figure}[!t]
	\centering
	\includegraphics[width=3.5in]{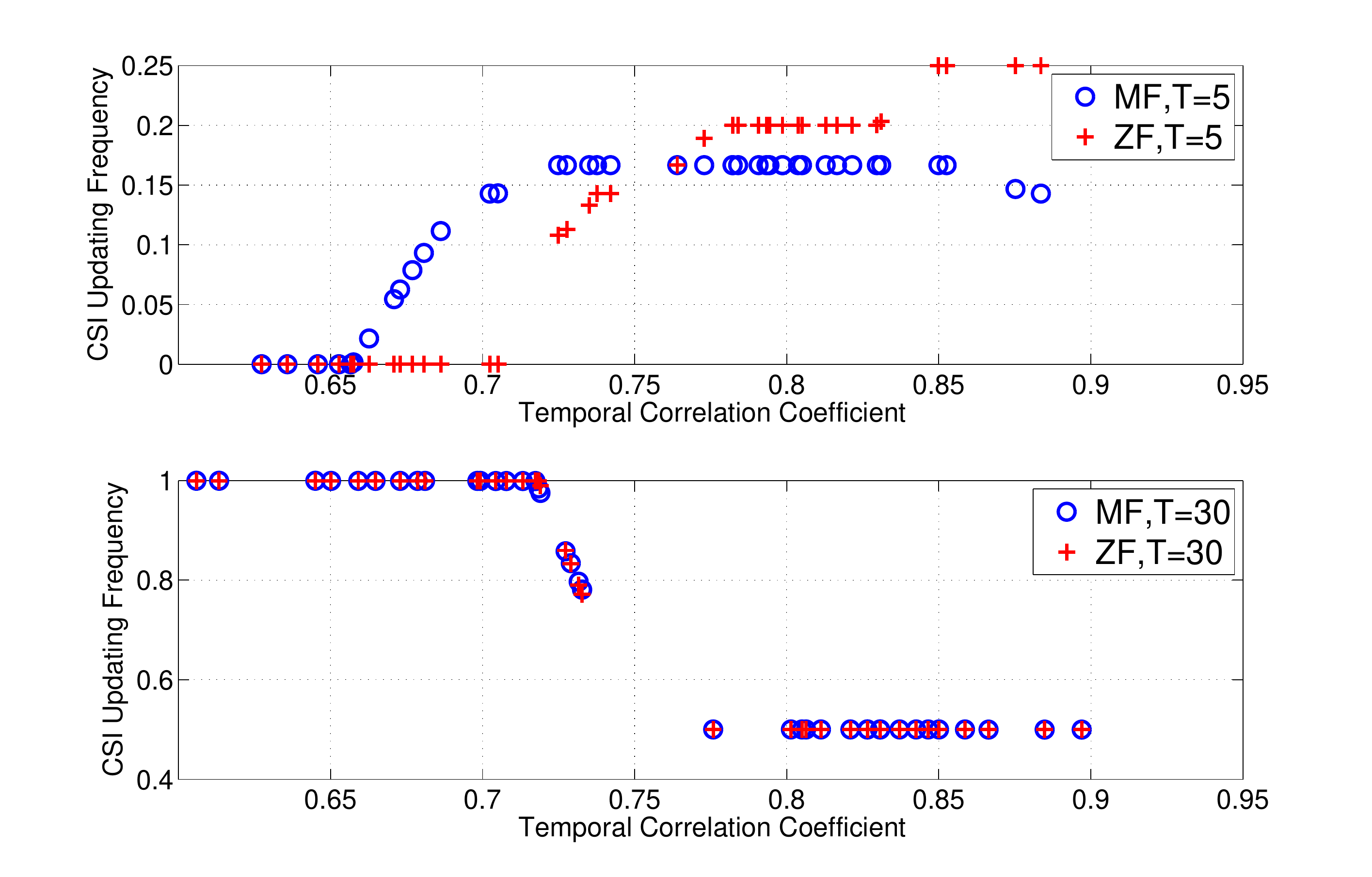}
	\caption{CSI updating frequency vs. temporal correlation coefficient under different pilot length. $M=64, K=40, C=50$, per user SNR is 10 dB.}
	\label{fig:Rho_p}
\end{figure}
\begin{figure}[!t]
	\centering
	\includegraphics[width=3.3in]{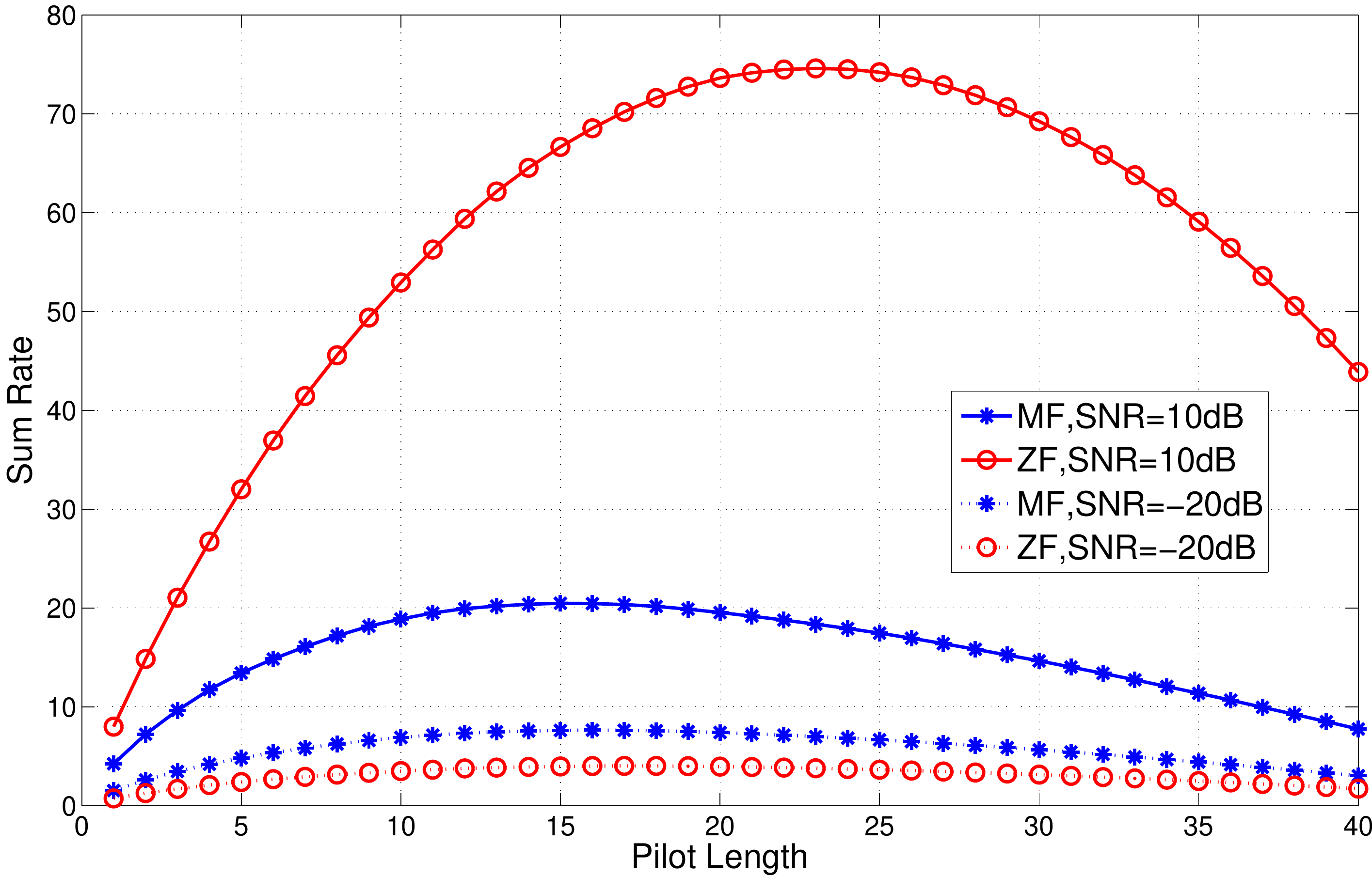}
	\caption{Sum rate vs. pilot length. $M=64, K=40, C=50$.}
	\label{fig:R_T}
\end{figure}
\begin{figure}[!t]
	\centering
	\includegraphics[width=3.5in]{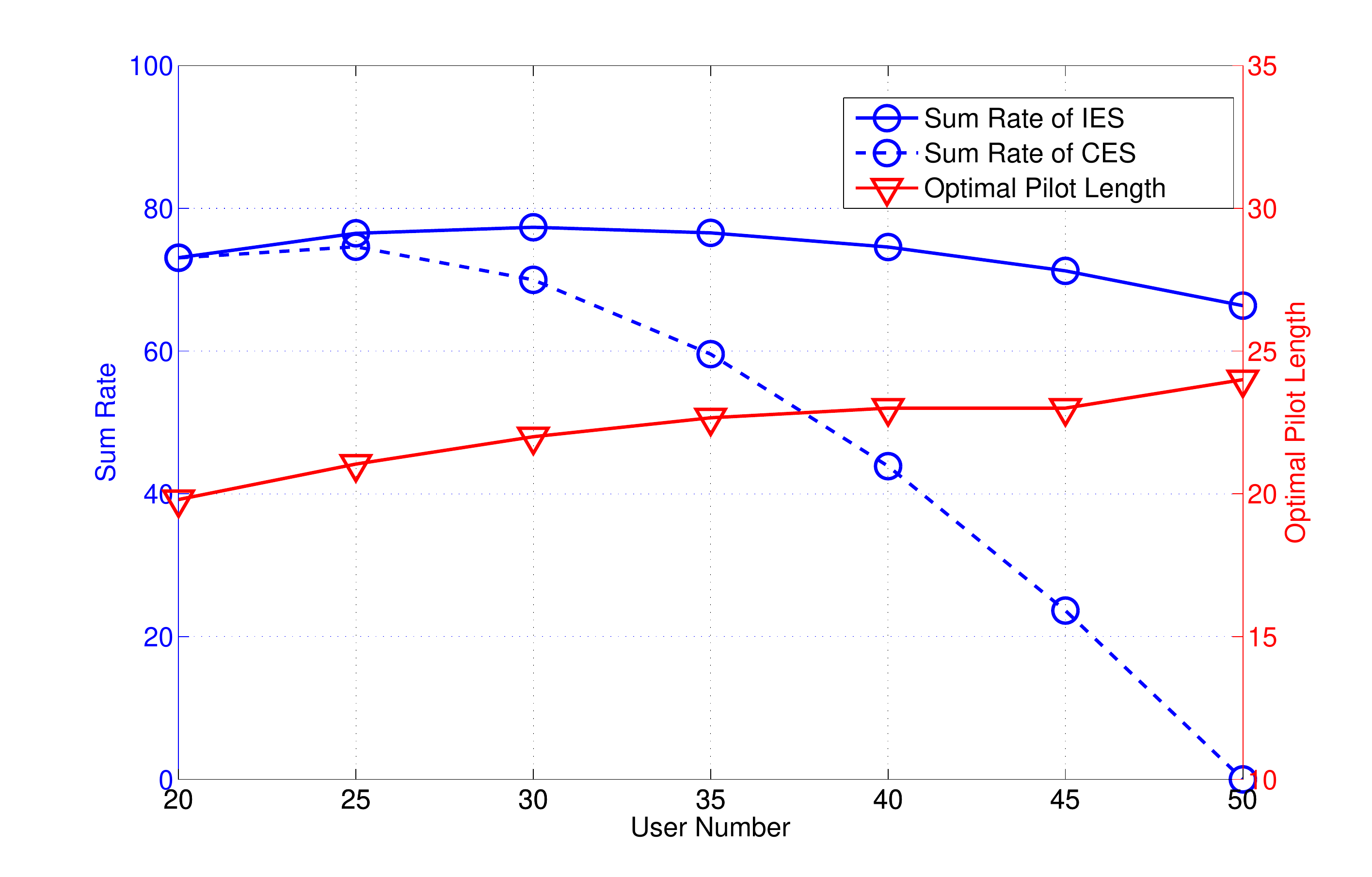}
	\caption{The sum rate and optimal pilot length of the proposed scheme using ZF precoding. $M=64, C=50$, per user SNR is 10 dB.}
	\label{fig:RT_K}
\end{figure}
In this section we evaluate the performance of the proposed estimation scheme.

We assume the same path loss coefficient for each user. The BS splits the total power equally to each user in order to ensure fairness. So the receive SNR of each user is the same. The temporal correlation coefficient of each user is uniformly distributed in the interval $[0.6,0.9]$.

Fig. \ref{fig:Rho_p} shows the allocation of estimation resource to users with different temporal correlation coefficients. When the total available pilot length is extremely low ($T=5$), the proposed scheme allocates zero estimation resource to users with temporal correlation coefficients below a certain threshold to leave the limited resource to users with better temporal correlation. The threshold of ZF precoder is higher than the threshold of the MF precoder, because it suffers more severely from the aged CSI. On the other hand, when the available pilot length is sufficiently high ($T=30$), both MF and ZF precoder estimate allocation the maximum estimation resource to the users with poor temporal correlation by setting the CSI updating frequency to 1.

The relationship between sum rate and pilot length is depicted in Fig. \ref{fig:R_T}. The sum rate of ZF outperforms that of MF in high SNR region and loses its advantage in low SNR region. The optimal pilot length of ZF is larger than that of MF under the same receive SNR. The sum rate of the conventional continuous estimation scheme is achieved at the point where pilot length is $K=40$, and our proposed scheme improves the performance by about 74\% using ZF precoder when the per user receive SNR is 10 dB.

Finally we investigate the influence of user number on the proposed scheme using ZF precoding, which is shown in Fig. \ref{fig:RT_K}. The results using MF precoding are omitted due to similarity. As the user number grows, the sum rate of the proposed scheme (IES) experiences the process of first increasing and then declining, which means the system first enjoys user diversity and then becomes overwhelmed by the excessive users. The performance gap between our scheme and the continuous estimation scheme (CES) increases with the growing number of users. When user number equals the length of channel block, CES fails to transmit any data while our scheme still obtains about 86\% of the peak sum rate. On the other hand, the optimal pilot length increases all the time, but its growth rate is much lower than the growth rate of user number. 
\section{Conclusion}\label{sec:conclusion}
In this paper, we investigate the influence of the age of CSI on the channel capacity of massive MIMO systems and derive closed-form expressions for transmission rate under two linear precoders, namely matched filter and zero forcing. The intermittent estimation scheme is proposed to reduce the overhead of CSI acquisition. We show the CSI updating intervals should follow a quasi-periodic distribution and obtain the optimal CSI updating frequency for each user to maximize sum capacity in the limited block length. The numerical results shows a great performance gain of the proposed scheme compared to the conventional continuous estimation scheme.
%



%

\appendices
\section{Proof of Theorem 1}\label{appendices1}
According to the Gaussian-Markov process, we know the correlation between the estimated channel and the channel of ground truth reads
\begin{equation}
\bm{h}_k=\rho_k^{n}\bm{\hat{h}}_k+\sqrt{1-\rho_k^{2n}}\bm{e}_k.
\end{equation}

When the MF precoding scheme is applied, the random variable $\bm{\hat{h}}_k^H\bm{v}_k$ satisfies the chi square distribution with $2M$ degrees of freedom by a scale factor of $1/(2\eta_k^\textrm{MF})$, so we can obtain the expectation of effective channel, the variation of effective channel and the interference as
\begin{equation}\label{MF1}
\epsilon_k\beta_k|\mathbb{E}\bm{h}_k^T\bm{v}_k|^2=\epsilon_k\beta_k M\rho_k^{2n},
\end{equation}
\begin{equation}\label{MF2}
\epsilon_k\beta_k(\mathbb{E}|\bm{h}_k^T\bm{v}_k|^2-|\mathbb{E}\bm{h}_k^T\bm{v}_k|^2)= \epsilon_k\beta_k,
\end{equation}
\begin{equation}\label{MF3}
\sum_{i\neq k}\epsilon_i\beta_i\mathbb{E}|\bm{h}_k^T\bm{v}_i|^2=\sum_{i\neq k}\epsilon_i\beta_i,
\end{equation}
respectively. We get the expression for SINR under MF precoding scheme by substituting (\ref{MF1}), (\ref{MF2}) and (\ref{MF3}) into (\ref{SINR}).

Similarly, we calculate the expectation of effective channel, the variation of effective channel and the interference under ZF precoding scheme as
\begin{equation}\label{ZF1}
\epsilon_k\beta_k|\mathbb{E}\bm{h}_k^T\bm{v}_k|^2=\epsilon_k\beta_k (M-K)\rho_k^{2n},
\end{equation}
\begin{equation}\label{ZF2}
\epsilon_k\beta_k(\mathbb{E}|\bm{h}_k^T\bm{v}_k|^2-|\mathbb{E}\bm{h}_k^T\bm{v}_k|^2)= \epsilon_k\beta_k(1-\rho_k^{2n}),
\end{equation}
\begin{equation}\label{ZF3}
\sum_{i\neq k}\epsilon_i\beta_i\mathbb{E}|\bm{h}_k^T\bm{v}_i|^2=(1-\rho_k^{2n})\sum_{i\neq k}\epsilon_i\beta_i,
\end{equation}
respectively. and the SINR is derived by substituting (\ref{ZF1}), (\ref{ZF2}) and (\ref{ZF3}) into (\ref{SINR}).
\section{Proof of Theorem 2}\label{appendices2}
	Firstly, we show that the function $G_k(x)$ is concave in its domain of definition $\mathcal{D}=\{x\in \mathcal{R}|x\geq 0\}$.
	
	Assume $x_1, x_2, x_3$, are three arbitrary different real numbers in $\mathcal{D}$. We arbitrarily choose $\theta\in[0,1]$ and get $x_3=\theta x_1+(1-\theta)x_2$. Without loss of generality, we assume $x_1<x_2<x_3$. 
	
	According to the definition of function $G_k(x)$, we have
	\begin{equation}
	G_k(x_i)=(1-t_i)G_k(\lfloor x_i \rfloor)+t_i G_k(\lfloor x_i \rfloor+1),~i=1,2,3,
	\end{equation}
	where $t_i= x_i-\lfloor x_i \rfloor,i=1,2,3$. 
	
	When $\lfloor x_1 \rfloor=\lfloor x_2 \rfloor$, we have
	\begin{equation}
	\frac{G_k(x_2)\!-\!G_k(x_1)}{x_2-x_1}=R_k(\lfloor x_2 \rfloor).
	\end{equation}
	
	When $\lfloor x_1 \rfloor<\lfloor x_2 \rfloor$, we can get the following inequality due to the monotonicity of $R_k(i)$:
	\begin{equation}
	\begin{split}
	&\frac{G_k(x_2)\!-\!G_k(x_1)}{x_2-x_1}\\
	\!=&\!\frac{t_1 R_k(\lfloor x_1 \rfloor)\!+\!\sum_{i=\lfloor x_1 \rfloor+1}^{\lfloor x_2 \rfloor-1}R_k(i)\!+\!(1-t_2)R_k(\lfloor x_2 \rfloor))}{x_2-x_1}\\
	\geq& R_k(\lfloor x_2 \rfloor)).
	\end{split}
	\end{equation}
	
	Similarly, we obtain
	\begin{equation}
	\frac{G_k(x_3)-G_k(x_2)}{x_3-x_2}\leq R_k(\lfloor x_2 \rfloor).
	\end{equation}
	
	So the inequality
	\begin{equation}
	\frac{G_k(x_3)-G_k(x_2)}{x_3-x_2} \leq \frac{G_k(x_2)-G_k(x_1)}{x_2-x_1}
	\end{equation}
	holds, which proves the concavity of the function $G_k(x)$.
	
	Then, by \emph{Jensen's inequality}\cite{Boyd2004convex}, the transmission rate of the $k$-th user satisfies the inequality
	\begin{equation}
	\mathbb{E}\{R_k\}=p_k\sum_n f_{k,n}G_k(n)\leq p_k G_k\left(\sum_n f_{k,n}\right)=p_k G\left(\frac{1}{p_k}\right),
	\end{equation}
	where the equality holds when $\left\{f_{k,n}\right\}$ satisfies (\ref{fkn}).
\section{Proof of Theorem 3}\label{appendices3}
	We only need to prove the function $S_k(p_k)=p_k G_k(1/{p_k})$ is concave with respect to $p_k$.
	
	Arbitrarily select two real numbers $p_{k1},p_{k2}$ from the interval $[0,1]$ , for any $\theta\in[0,1]$ we have 
	\begin{equation}
	\begin{split}
	&G_k\left(\frac{\theta p_{k1}}{\theta p_{k1}\!+\!\bar{\theta} p_{k2}}\frac{1}{p_{k1}}\!+\!\frac{\theta p_{k2}}{\theta p_{k1}\!+\!\bar{\theta} p_{k2}}\frac{1}{p_{k2}}\right)\\
	\leq&\frac{\theta p_{k1}}{\theta p_{k1}\!+\!\bar{\theta} p_{k2}}G_k\left(\frac{1}{p_{k1}})\!+\!\frac{\theta p_{k2}}{\theta p_{k1}\!+\!\bar{\theta} p_{k2}}G_k(\frac{1}{p_{k2}}\right),
	\end{split}
	\end{equation}
	due to the concavity of $G_k(x)$ from the proof of Theorem 2, where $\bar{\theta}=1-\theta$.
	
	The above inequality can be simplified as
	\begin{equation}
	\begin{split}
	&(\theta p_{k1}+\bar{\theta} p_{k2})G_k\left(\frac{1}{\theta p_{k1}+\bar{\theta} p_{k2}}\right)\\
	\leq &\theta p_{k1}G_k\left(\frac{1}{p_{k1}}\right)+\bar{\theta} p_{k2}G_k\left(\frac{1}{p_{k2}}\right),
	\end{split}
	\end{equation}
	which proves the concavity of the function $S_k(p_k)$.

\section*{Acknowledgment}
This work is sponsored in part by the Nature Science Foundation of China (No. 61461136004, No. 61571265, No. 91638204), and Hitachi Ltd.



 \bibliographystyle{ieeetr}
 \bibliography{IEEEabrv,myref}

\end{document}